\newcommand{\im}{\mathrm{i}}
\newcommand{\N}{\mathbb{N}}
\newcommand{\R}{\mathbb{R}}
\newcommand{\C}{\mathbb{C}}
\newcommand{\defeq}{:=}
\newcommand{\tens}{\otimes}
\newcommand{\ctens}{\hat{\otimes}}
\newcommand{\cH}{\mathcal{H}}
\newcommand{\rL}{\mathrm{L}}
\newcommand{\xd}{\mathrm{d}}
\newcommand{\cA}{\mathcal{A}}
\newcommand{\cS}{\mathcal{S}}
\newcommand{\one}{\mathbf{1}}
\newcommand{\ds}{\circ}
\newcommand{\obs}{\mathcal{O}}
\newcommand{\aglue}{\diamond}
\newcommand{\tord}{\mathbf{T}}
\newcommand{\no}[1]{:#1:}
\newcommand{\ano}[1]{\blacktriangleleft #1 \blacktriangleright}
\theoremstyle{definition}
\newtheorem{dfn}{Definition}[section]
\theoremstyle{plain}
\newtheorem{prop}[dfn]{Proposition}
\begin{document}

\title[Observables in the General Boundary Formulation]
{Observables in the\\ General Boundary Formulation}
\author[Robert Oeckl]{Robert Oeckl}
\address{%
Instituto de Matemáticas\\
Universidad Nacional Autónoma de México\\
Campus Morelia\\
C.P. 58190, Morelia, Michoacán\\
Mexico}
\email{robert@matmor.unam.mx}

\subjclass{Primary 81P15; Secondary 81P16, 81T70, 53D50, 81S40, 81R30}
\keywords{quantum field theory, general boundary formulation, observable, expectation value, quantization, coherent state}

\begin{abstract}
We develop a notion of quantum observable for the general boundary formulation of quantum theory. This notion is adapted to spacetime regions rather than to hypersurfaces and naturally fits into the topological quantum field theory like axiomatic structure of the general boundary formulation. We also provide a proposal for a generalized concept of expectation value adapted to this type of observable. We show how the standard notion of quantum observable arises as a special case together with the usual expectation values. We proceed to introduce various quantization schemes to obtain such quantum observables including path integral quantization (yielding the time-ordered product), Berezin-Toeplitz (antinormal ordered) quantization and normal ordered quantization, and discuss some of their properties.

\end{abstract}

\maketitle

\section{Motivation: Commutation relations and quantum field theory}

In standard quantum theory one is used to think of observables as encoded in operators on the Hilbert space of states. The algebra formed by these is then seen as encoding fundamental structure of the quantum theory. Moreover, this algebra often constitutes the primary target of quantization schemes that aim to produce a quantum theory from a classical theory. Commutation relations in this algebra then provide a key ingredient of correspondence principles between a classical theory and its quantization.
We shall argue in the following that while this point of view is natural in a non-relativistic setting, it is less compelling in a special relativistic setting and becomes questionable in a general relativistic setting.\footnote{By ``general relativistic setting'' we shall understand here a context where the metric of spacetime is a dynamical object, but which is not necessarily limited to Einstein's theory of General Relativity.}

In non-relativistic quantum mechanics (certain) operators
correspond to measurements that can be applied at any given time, meaning that the
measurement is performed at that time. Let us say we consider the measurement of two
quantities, one associated with the operator $A$ and another associated with the
operator $B$. In particular, we can then also say which operator is associated with
the \emph{consecutive} measurement of both quantities. If we first measure $A$ and then
$B$ the operator is the product $B A$, and if we first measure $B$ and then $A$ the
operator is the product $A B$.\footnote{The notion of composition of measurements considered here is one where the output value generated by the composite measurement can be understood as the product of output values of the individual measurements, rather than one where one would obtain separate output values for both measurements.} Hence, the operator product has the operational meaning
of describing the \emph{temporal composition} of measurements. One of the key features of quantum
mechanics is of course the fact that in general $A B\neq B A$, i.e., a different
temporal ordering of measurements leads to a different outcome.

The treatment of operators representing observables is different in quantum field theory. Here, such operators are \emph{labeled} with the time at which they are applied. For example, we write $\phi(t,x)$ for a field operator at time $t$. Hence, if we want to combine the measurement processes
associated with operators $\phi(t,x)$ and $\phi(t',x')$ say, there is only one
operationally meaningful way to do so. The operator associated with the combined process
is the \emph{time-ordered} product of the two operators, $\tord{\phi(t,x) \phi(t',x')}$.
Of course, this time-ordered product is commutative since the information about the
temporal ordering of the processes associated with the operators is already contained
in their labels. Nevertheless, in traditional treatments of quantum field theory
one first constructs a non-commutative algebra of field operators starting with equal-time commutation relations. Since the concept of equal-time hypersurface is not Poincaré invariant, one then
goes on to generalize these commutation relations to field operators at different times. In particular, one finds that for two localized operators $A(t,x)$ and $B(t',x')$, the commutator obeys
\begin{equation}
 [A(t,x),B(t',x')]= 0\qquad\text{if $(t,x)$ and $(t',x')$ are spacelike separated} ,
\label{eq:causcom}
\end{equation}
which is indeed a Poincaré invariant condition.
The time-ordered product is usually treated as a concept that is \emph{derived} from
the non-commutative operator product. From this point of view, condition (\ref{eq:causcom}) serves to make sure that it is well defined and does not depend on the inertial frame. Nevertheless, it is the former and not the latter that has a direct operational meaning. Indeed, essentially all the predictive power of quantum field theory derives from the amplitudes and the S-matrix which are defined entirely in terms of time-ordered products. On the other hand, the non-commutative operator product can be recovered from the time-ordered product. Equal-time commutation relations can be obtained as the limit,
\begin{equation*}
 [A(t,x),B(t,x')]=\lim_{\epsilon\to +0}
  \tord{A(t+\epsilon,x) B(t-\epsilon,x')} - \tord{A(t-\epsilon,x) B(t+\epsilon,x')} .
\end{equation*}
The property (\ref{eq:causcom}) can then be seen as arising from the transformation properties of this limit and its non-equal time generalization.

We conclude that in a special relativistic setting, there are good reasons to regard the time-ordered product of observables as more fundamental than the non-commutative operator product. This suggests to try to formulate the theory of observables in terms of the former rather than the latter. In a (quantum) general relativistic setting with no predefined background metric a condition such as (\ref{eq:causcom}) makes no longer sense, making the postulation of a non-commutative algebra structure for observables even more questionable.

In this paper we shall consider a proposal for a concept of quantum observable that takes these concerns into account. The wider framework in which we embed this is the general boundary formulation of quantum theory (GBF) \cite{Oe:GBQFT}. We start in Section~\ref{sec:rev} with a short review of the relevant ingredients of the GBF. In Section~\ref{sec:obs} we introduce a concept of quantum observable in an axiomatic way, provide a suitably general notion of expectation value and show how standard concepts of quantum observable and expectation values arise as special cases. In Section~\ref{sec:quant} we consider different quantization prescriptions of classical observables that produce such quantum observables, mainly in a field theoretic context.

\section{Short review of the general boundary formulation}
\label{sec:rev}

\subsection{Core axioms}
\label{sec:caxioms}

The basic data of a general boundary quantum field theory consists of two types: geometric objects that encode a basic structure of spacetime and algebraic objects that encode notions of quantum states and amplitudes. The algebraic objects are assigned to the geometric objects in such a way that the core axioms of the general boundary formulation are satisfied. These may be viewed as a special variant of the axioms of a topological quantum field theory \cite{Ati:tqft}. They have been elaborated, with increasing level of precision, in \cite{Oe:boundary, Oe:GBQFT,Oe:2dqym,Oe:holomorphic}. In order for this article to be reasonably self-contained, we repeat below the version from \cite{Oe:holomorphic}.

The geometric objects are of two kinds:
\begin{description}
\item[Regions] These are (certain) oriented manifolds of dimension $d$ (the spacetime dimension), usually with boundary.
\item[Hypersurfaces] These are (certain) oriented manifolds of dimension $d-1$, here assumed without boundary.\footnote{The setting may be generalized to allow for hypersurfaces with boundaries along the lines of \cite{Oe:2dqym}. However, as the required modifications are of little relevance in the context of the present paper, we restrict to the simpler setting.}
\end{description}
Depending on the theory to be modeled, the manifolds may carry additional structure such as that of a Lorentzian metric in the case of quantum field theory. For more details see the references mentioned above. The core axioms may be stated as follows:

\begin{itemize}
\item[(T1)] Associated to each hypersurface $\Sigma$ is a complex
  separable Hilbert space $\cH_\Sigma$, called the \emph{state space} of
  $\Sigma$. We denote its inner product by
  $\langle\cdot,\cdot\rangle_\Sigma$.
\item[(T1b)] Associated to each hypersurface $\Sigma$ is a conjugate linear
  isometry $\iota_\Sigma:\cH_\Sigma\to\cH_{\bar{\Sigma}}$. This map
  is an involution in the sense that $\iota_{\bar{\Sigma}}\circ\iota_\Sigma$
  is the identity on  $\cH_\Sigma$.
\item[(T2)] Suppose the hypersurface $\Sigma$ decomposes into a disjoint
  union of hypersurfaces $\Sigma=\Sigma_1\cup\cdots\cup\Sigma_n$. Then,
  there is an isometric isomorphism of Hilbert spaces
  $\tau_{\Sigma_1,\dots,\Sigma_n;\Sigma}:\cH_{\Sigma_1}\ctens\cdots\ctens\cH_{\Sigma_n}\to\cH_\Sigma$.
  The composition of the maps $\tau$ associated with two consecutive
  decompositions is identical to the map $\tau$ associated to the
  resulting decomposition.
\item[(T2b)] The involution $\iota$ is compatible with the above
  decomposition. That is,
  $\tau_{\bar{\Sigma}_1,\dots,\bar{\Sigma}_n;\bar{\Sigma}}
  \circ(\iota_{\Sigma_1}\ctens\cdots\ctens\iota_{\Sigma_n}) 
  =\iota_\Sigma\circ\tau_{\Sigma_1,\dots,\Sigma_n;\Sigma}$.
\item[(T4)] Associated with each region $M$ is a linear map
  from a dense subspace $\cH_{\partial M}^\ds$ of the state space
  $\cH_{\partial M}$ of its boundary $\partial M$ (which carries the
  induced orientation) to the complex
  numbers, $\rho_M:\cH_{\partial M}^\ds\to\C$. This is called the
  \emph{amplitude map}.
\item[(T3x)] Let $\Sigma$ be a hypersurface. The boundary $\partial\hat{\Sigma}$ of the associated empty region $\hat{\Sigma}$ decomposes into the disjoint union $\partial\hat{\Sigma}=\bar{\Sigma}\cup\Sigma'$, where $\Sigma'$ denotes a second copy of $\Sigma$. Then, $\tau_{\bar{\Sigma},\Sigma';\partial\hat{\Sigma}}(\cH_{\bar{\Sigma}}\tens\cH_{\Sigma'})\subseteq\cH_{\partial\hat{\Sigma}}^\ds$. Moreover, $\rho_{\hat{\Sigma}}\circ\tau_{\bar{\Sigma},\Sigma';\partial\hat{\Sigma}}$ restricts to a bilinear pairing $(\cdot,\cdot)_\Sigma:\cH_{\bar{\Sigma}}\times\cH_{\Sigma'}\to\C$ such that $\langle\cdot,\cdot\rangle_\Sigma=(\iota_\Sigma(\cdot),\cdot)_\Sigma$.
\item[(T5a)] Let $M_1$ and $M_2$ be regions and $M\defeq M_1\cup M_2$ be their disjoint union. Then $\partial M=\partial M_1\cup \partial M_2$ is also a disjoint union and $\tau_{\partial M_1,\partial M_2;\partial M}(\cH_{\partial M_1}^\ds\tens \cH_{\partial M_2}^\ds)\subseteq \cH_{\partial M}^\ds$. Then, for all $\psi_1\in\cH_{\partial M_1}^\ds$ and $\psi_2\in\cH_{\partial M_2}^\ds$,
\begin{equation}
 \rho_{M}\circ\tau_{\partial M_1,\partial M_2;\partial M}(\psi_1\tens\psi_2)= \rho_{M_1}(\psi_1)\rho_{M_2}(\psi_2) .
\end{equation}
\item[(T5b)] Let $M$ be a region with its boundary decomposing as a disjoint union $\partial M=\Sigma_1\cup\Sigma\cup \overline{\Sigma'}$, where $\Sigma'$ is a copy of $\Sigma$. Let $M_1$ denote the gluing of $M$ with itself along $\Sigma,\overline{\Sigma'}$ and suppose that $M_1$ is a region. Note $\partial M_1=\Sigma_1$. Then, $\tau_{\Sigma_1,\Sigma,\overline{\Sigma'};\partial M}(\psi\tens\xi\tens\iota_\Sigma(\xi))\in\cH_{\partial M}^\ds$ for all $\psi\in\cH_{\partial M_1}^\ds$ and $\xi\in\cH_\Sigma$. Moreover, for any orthonormal basis $\{\xi_i\}_{i\in I}$ of $\cH_\Sigma$, we have for all $\psi\in\cH_{\partial M_1}^\ds$,
\begin{equation}
 \rho_{M_1}(\psi)\cdot c(M;\Sigma,\overline{\Sigma'})
 =\sum_{i\in I}\rho_M\circ\tau_{\Sigma_1,\Sigma,\overline{\Sigma'};\partial M}(\psi\tens\xi_i\tens\iota_\Sigma(\xi_i)),
\label{eq:glueax1}
\end{equation}
where $c(M;\Sigma,\overline{\Sigma'})\in\C\setminus\{0\}$ is called the \emph{gluing anomaly factor} and depends only on the geometric data.
\end{itemize}

As in \cite{Oe:holomorphic} we omit in the following the explicit mention of the maps $\tau$.

\subsection{Amplitudes and probabilities}
\label{sec:ampprob}

In standard quantum theory transition amplitudes can be used to encode measurements. The setup, in its simplest form, involves an initial state $\psi$ and a final state $\eta$. The initial state encodes a \emph{preparation} of or \emph{knowledge} about the measurement, while the final state encodes a \emph{question} about or \emph{observation} of the system. The modulus square $|\langle\eta,U\psi\rangle|^2$, where $U$ is the time-evolution operator between initial and final time, is then the probability for the answer to the question to be affirmative. (We assume states to be normalized.) This is a \emph{conditional probability} $P(\eta|\psi)$, namely the probability to observe $\eta$ given that $\psi$ was prepared.

In the GBF this type of measurement setup generalizes considerably.\footnote{Even in standard quantum theory, generalizations are possible which involve subspaces of the Hilbert space instead of states. A broader analysis of this situation shows that formula (\ref{eq:prob}) is a much milder generalization of standard probability rules than might seem at first sight, see \cite{Oe:GBQFT}.} Given a spacetime region $M$, a \emph{preparation} of or \emph{knowledge} about the measurement is encoded through a closed subspace $\cS$ of the boundary Hilbert space $\cH_{\partial M}$. Similarly, the \emph{question} or \emph{observation} is encoded in another closed subspace $\cA$ of $\cH_{\partial M}$. The \emph{conditional probability} for observing $\cA$ given that $\cS$ is prepared (or known to be the case) is given by the following formula \cite{Oe:GBQFT,Oe:probgbf}:
\begin{equation}
   P(\cA|\cS)=\frac{\|\rho_M\circ P_\cS\circ P_\cA\|^2}
   {\|\rho_M\circ P_\cS\|^2} .
\label{eq:prob}
\end{equation}
Here $P_{\cS}$ and $P_{\cA}$ are the orthogonal projectors onto the subspaces $\cS$ and $\cA$ respectively. $\rho_M\circ P_{\cS}$ is the linear map $\cH_{\partial M}\to\C$ given by the composition of the amplitude map $\rho_M$ with the projector $P_{\cS}$. A requirement for (\ref{eq:prob}) to make sense is that this composed map is continuous, but does not vanish. (The amplitude map $\rho_M$ is generically not continuous.) That is, $\cS$ must be neither ``too large'' nor ``too small''. Physically this means that $\cS$ must on the one hand be sufficiently restrictive while on the other hand not imposing an impossibility. The continuity of $\rho_M\circ P_{\cS}$ means that it is an element in the dual Hilbert space $\cH_{\partial M}^*$. The norm in $\cH_{\partial M}^*$ is denoted in formula (\ref{eq:prob}) by $\|\cdot\|$. With an analogous explanation for the numerator the mathematical meaning of (\ref{eq:prob}) is thus clear.

In \cite{Oe:GBQFT}, where this probability interpretation of the GBF was originally proposed, the additional assumption $\cA\subseteq \cS$ was made, and with good reason. Physically speaking, this condition enforces that we only ask questions in a way that takes into account fully what we already know. Since it is of relevance in the following, we remark that formula (\ref{eq:prob}) might be rewritten in this case as follows:
\begin{equation}
   P(\cA|\cS)=\frac{\langle \rho_M\circ P_\cS,\rho_M\circ P_\cA\rangle}
   {\|\rho_M\circ P_\cS\|^2} .
\label{eq:prob2}
\end{equation}
Here the inner product $\langle\cdot,\cdot\rangle$ is the inner product of the dual Hilbert space $\cH_{\partial M}^*$. Indeed, whenever $P_{\cS}$ and $P_{\cA}$ commute, (\ref{eq:prob2}) coincides with (\ref{eq:prob}).

\subsection{Recovery of standard transition amplitudes and probabilities}
\label{sec:stdampl}

We briefly recall in the following how standard transition amplitudes are recovered from amplitude functions. Similarly, we recall how standard transition probabilities arise as special cases of the formula (\ref{eq:prob}). Say $t_1$ is some initial time and $t_2> t_1$ some final time, and we consider the spacetime region $M=[t_1,t_2]\times\R^3$ in Minkowski space. $\partial M$ is the disjoint union $\Sigma_1\cup\bar{\Sigma}_2$ of hypersurfaces of constant $t=t_1$ and $t=t_2$ respectively. We have chosen the orientation of $\Sigma_2$ here to be opposite to that induced by $M$, but equal (under time-translation) to that of $\Sigma_1$. Due to axioms (T2) and (T1b), we can identify the Hilbert space $\cH_{\partial M}$ with the tensor product $\cH_{\Sigma_1}\ctens\cH_{\Sigma_2}^*$. The amplitude map $\rho_M$ associated with $M$ can thus be viewed as a linear map $\cH_{\Sigma_1}\ctens\cH_{\Sigma_2}^*\to\C$.

In the standard formalism, we have on the other hand a single Hilbert space $H$ of states and a unitary time-evolution map $U(t_1,t_2):H\to H$. To relate the two settings we should think of $H$, $\cH_{\Sigma_1}$ and $\cH_{\Sigma_2}$ as really identical (due to time-translation being an isometry). Then, for any $\psi,\eta\in H$, the amplitude map $\rho_M$ and the operator $U$ are related as
\begin{equation}
 \rho_M(\psi\tens\iota(\eta))=\langle\eta, U(t_1,t_2)\psi\rangle .
\label{eq:amtampl}
\end{equation}

Consider now a measurement in the same spacetime region, where an initial state $\psi$ is prepared at time $t_1$ and a final state $\eta$ is tested at time $t_2$. The standard formalism tells us that the probability for this is (assuming normalized states):
\begin{equation}
P(\eta|\psi)=|\langle \eta, U(t_1,t_2)\psi\rangle|^2.
\label{eq:ptampl}
\end{equation}
In the GBF, the preparation of $\psi$ and observation of $\eta$ are encoded in the following subspaces of $\cH_{\Sigma_1}\ctens\cH_{\Sigma_2}^*$:
\begin{equation}
 \cS=\{\psi\tens\xi: \xi\in\cH_{\Sigma_2}\}\quad\text{and}\quad
 \cA=\{\lambda\psi\tens\iota(\eta): \lambda\in\C\} .
\end{equation}
Using (\ref{eq:amtampl}) one can easily show that then $P(\cA|\cS)=P(\eta|\psi)$, i.e., the expressions (\ref{eq:prob}) and (\ref{eq:ptampl}) coincide. This remains true if, alternatively, we define $\cA$ disregarding the knowledge encoded in $\cS$, i.e., as
\begin{equation}
 \cA=\{\xi\tens\iota(\eta): \xi\in\cH_{\Sigma_1}\} .
\end{equation}

\section{A conceptual framework for observables}
\label{sec:obs}

\subsection{Axiomatics}
\label{sec:obsax}

Taking account of the fact that realistic measurements are extended both in space as well as in time, it appears sensible to locate also the mathematical objects that represent observables in spacetime regions. This is familiar for example from algebraic quantum field theory, while being in contrast to idealizing measurements as happening at instants of time as in the standard formulation of quantum theory.

Mathematically, we model an observable associated with a given spacetime region $M$ as a replacement of the corresponding amplitude map $\rho_M$. That is, an observable in $M$ is a linear map $\cH_{\partial M}^\ds\to\C$, where $\cH_{\partial M}^\ds$ is the dense subspace of $\cH_{\partial M}$ appearing in core axiom (T4). Not any such map needs to be an observable though. Which map exactly qualifies as an observable may generally depend on the theory under consideration.

\begin{itemize}
\item[(O1)] Associated to each spacetime region $M$ is a real vector space $\obs_M$ of linear maps $\cH_{\partial M}^\ds\to\C$, called \emph{observable maps}. In particular, $\rho_M\in\obs_M$.
\end{itemize}

The most important operation that can be performed with observables is that of \emph{composition}. This composition is performed exactly in the same way as prescribed for amplitude maps in core axioms (T5a) and (T5b). This leads to an additional condition on the spaces $\obs_M$ of observables, namely that they be closed under composition.

\begin{itemize}
\item[(O2a)] Let $M_1$ and $M_2$ be regions as in (T5a) and $O_1\in\obs_{M_1}$ and $O_2\in\obs_{M_2}$. Then, there is $O_3\in\obs_{M_1\cup M_2}$ such that for all $\psi_1\in\cH_{\partial M_1}^\ds$ and $\psi_2\in\cH_{\partial M_2}^\ds$,
\begin{equation}
 O_3(\psi_1\tens\psi_2)= \rho_{M_1}(\psi_1)\rho_{M_2}(\psi_2) .
\end{equation}
\item[(O2b)] Let $M$ be a region with its boundary decomposing as a disjoint union $\partial M=\Sigma_1\cup\Sigma\cup \overline{\Sigma'}$ and $M_1$ given as in (T5b) and $O\in\obs_{M}$. Then, there exists $O_1\in\obs_{M_1}$ such that for any orthonormal basis $\{\xi_i\}_{i\in I}$ of $\cH_\Sigma$ and for all $\psi\in\cH_{\partial M_1}^\ds$,
\begin{equation}
 O_1(\psi)\cdot c(M;\Sigma,\overline{\Sigma'})
 =\sum_{i\in I}O(\psi\tens\xi_i\tens\iota_\Sigma(\xi_i)) .
\end{equation}
\end{itemize}
We generally refer to the gluing operations of observables of the types described in (O2a) and (O2b) as well as their iterations and combinations as \emph{compositions of observables}. Physically, the composition is meant to represent the combination of measurements. Combination is here to be understood as in classical physics, when the product of observables is taken.

\subsection{Expectation values}

As in the standard formulation of quantum theory, the \emph{expectation value} of an observable depends on a preparation of or knowledge about a system. As recalled in Section~\ref{sec:ampprob}, this is encoded for a region $M$ in a closed subspace $\cS$ of the boundary Hilbert space $\cH_{\partial M}$. Given an observable $O\in\obs_M$ and a closed subspace $\cS\subseteq\cH_{\partial M}$, the expectation value of $O$ with respect to $\cS$ is defined as
\begin{equation}
\langle O\rangle_{\cS}\defeq\frac{\langle\rho_M\circ P_{\cS}, O\rangle}{\|\rho_M\circ P_{\cS}\|^2} .
\label{eq:expval}
\end{equation}
We use notation here from Section~\ref{sec:ampprob}. Also, as there we need $\rho_M\circ P_{\cS}$ to be continuous and different from zero for the expectation value to make sense.

We proceed to make some remarks about the motivation for postulating the expression (\ref{eq:expval}). Clearly, the expectation value must be linear in the observable. Another important requirement is that we would like probabilities in the sense of Section~\ref{sec:ampprob} to arise as a special case of expectation values. Indeed, given a closed subspace $\cA$ of $\cH_{\partial M}$ and setting $O=\rho_M\circ P_{\cA}$ we see that expression (\ref{eq:expval}) reproduces exactly expression (\ref{eq:prob2}). At least in the case where the condition $\cA\subseteq\cS$ is met, this coincides with expression (\ref{eq:prob}) and represents the conditional probability to observe $\cA$ given $\cS$.

\subsection{Recovery of standard observables and expectation values}
\label{sec:recobsexp}

Of course, it is essential that the present proposal for implementing observables in the GBF can reproduce observables and their expectation values as occurring in the standard formulation of quantum theory. There observables are associated to instants of time, i.e., equal-time hypersurfaces. To model these we use ``infinitesimally thin'' regions, also called \emph{empty regions}, which geometrically speaking are really hypersurfaces, but are treated as regions.

Concretely, consider the equal-time hypersurface at time $t$ in Minkowski space, i.e., $\Sigma=\{t\}\times\R^3$. We denote the empty region defined by the hypersurface $\Sigma$ as $\hat{\Sigma}$. The relation between an observable map $O\in\obs_M$ and the corresponding operator $\tilde{O}$ is then analogous to the relation between the amplitude map and the time-evolution operator as expressed in equation (\ref{eq:amtampl}). By definition, $\partial \hat{\Sigma}$ is equal to the disjoint union $\Sigma\cup\bar{\Sigma}$ so that $\cH_{\partial \hat{\Sigma}}=\cH_{\Sigma}\ctens\cH_{\Sigma}^*$. The Hilbert space $\cH_{\Sigma}$ is identified with the conventional Hilbert space $H$ and for $\psi,\eta\in H$ we require
\begin{equation}
 O(\psi\tens\iota(\eta))=\langle\eta,\tilde{O}\psi\rangle_\Sigma .
\label{eq:obscor}
\end{equation}
Note that we can glue two copies of $\hat{\Sigma}$ together, yielding again a copy of $\hat{\Sigma}$. The induced composition of observable maps then translates via (\ref{eq:obscor}) precisely to the composition of the corresponding operators. In this way we recover the usual operator product for observables of the standard formulation. 

Consider now a normalized state $\psi\in H=\cH_{\Sigma}$ encoding a preparation. This translates in the GBF language to the subspace $\cS=\{\psi\tens\xi:\xi\in\cH_{\Sigma}^*\}$ of $\cH_{\partial \hat{\Sigma}}$ as reviewed in Section~\ref{sec:stdampl}. The amplitude map $\rho_{\hat{\Sigma}}$ can be identified with the inner product of $H=\cH_{\Sigma}$ due to core axiom (T3x). Thus, $\rho_{\hat{\Sigma}}\circ P_{\cS}(\xi\tens\iota(\eta))=\langle\eta,P_{\psi}\xi\rangle_\Sigma$, where $P_{\psi}$ is the orthogonal projector in $\cH_{\Sigma}$ onto the subspace spanned by $\psi$. This makes it straightforward to evaluate the denominator of (\ref{eq:expval}). Let $\{\xi_i\}_{i\in\N}$ be an orthonormal basis of $\cH_{\Sigma}$, which moreover we choose for convenience such that $\xi_1=\psi$. Then,
\begin{multline}
\|\rho_M\circ P_{\cS}\|^2
=\sum_{i,j=1}^\infty \left|\rho_M\circ P_{\cS} (\xi_i\tens\iota(\xi_j))\right|^2
=\sum_{i,j=1}^\infty \left|\langle \xi_j,P_{\psi} \xi_i\rangle_\Sigma\right|^2
=1 .
\end{multline}
For the numerator of (\ref{eq:expval}) we observe
\begin{multline}
\langle\rho_M\circ P_{\cS}, O\rangle
=\sum_{i,j=1}^\infty \overline{\rho_M\circ P_{\cS} (\xi_i\tens\iota(\xi_j))}\,
 O(\xi_i\tens\iota(\xi_j))\\
=\sum_{i,j=1}^\infty \langle P_{\psi} \xi_i, \xi_j\rangle_\Sigma\,
 \langle\xi_j,\tilde{O}\xi_i\rangle_\Sigma
=\langle \psi, \tilde{O}\psi\rangle_\Sigma .
\end{multline}
Hence, the GBF formula (\ref{eq:expval}) recovers in this case the conventional expectation value of $\tilde{O}$ with respect to the state $\psi$.

\section{Quantization}
\label{sec:quant}

We turn in this section to the problem of the \emph{quantization} of classical observables. On the one hand, we consider the question of how specific quantization schemes that produce Hilbert spaces and amplitude functions satisfying the core axioms can be extended to produce observables. On the other hand, we discuss general features of quantization schemes for observables and the relation to conventional schemes.

\subsection{Schrödinger-Feynman quantization}

Combining the Schrödinger representation with the Feynman path integral yields a quantization scheme that produces Hilbert spaces for hypersurfaces and amplitude maps for regions in a way that ``obviously'' satisfies the core axioms \cite{Oe:boundary,Oe:timelike,Oe:KGtl}. We shall see that it is quite straightforward to include observables into this scheme. Moreover, the resulting quantization can be seen to be in complete agreement with the results of standard approaches to quantum field theory.

We recall that in this scheme states on a hypersurface $\Sigma$ arise as wave functions on the space space $K_{\Sigma}$ of field configurations on $\Sigma$. These form a Hilbert space $\cH_{\Sigma}$ of square-integrable functions with respect to a (fictitious) translation invariant measure $\mu_{\Sigma}$:
\begin{equation}
\langle \psi',\psi\rangle_{\Sigma} \defeq\int_{K_\Sigma} \overline{\psi'(\varphi)}\psi(\varphi)\,\xd\mu_{\Sigma}(\varphi) .
\label{eq:sip}
\end{equation}
The amplitude map for a region $M$ arises as the Feynman path integral,
\begin{equation}
\rho_M(\psi)\defeq\int_{K_{M}} \psi\left(\phi|_\Sigma\right) e^{\im S_M(\phi)}\,\xd\mu_{M}(\phi) ,
\label{eq:api}
\end{equation}
where $S_M$ is the action evaluated in $M$, and $K_M$ is the space of field configurations in $M$.

The Feynman path integral is of course famous for resisting a rigorous definition and it is a highly non-trivial task to make sense of expressions (\ref{eq:api}) or even (\ref{eq:sip}) in general. Nevertheless, much of text-book quantum field theory relies on the Feynman path integral and can be carried over to the present context relatively easily for equal-time hypersurfaces in Minkowski space and regions bounded by such. Moreover, for other special types of regions and hypersurfaces this quantization program has also been successfully carried through for linear or perturbative quantum field theories. Notably, this includes timelike hypersurfaces \cite{Oe:timelike,Oe:KGtl} and has led to a widening of the concept of an asymptotic S-matrix \cite{CoOe:spsmatrix,CoOe:smatrixgbf}.

We proceed to incorporate observables into the quantization scheme. To this end, a classical observable $F$ in a region $M$ is modeled as a real (or complex) valued function on $K_M$. According to Section~\ref{sec:obsax} the quantization of $F$, which we denote here by $\rho_M^F$, must be a linear map $\cH_{\partial M}^\ds\to\C$. We define it as
\begin{equation}
 \rho_M^F(\psi)\defeq\int_{K_{M}} \psi\left(\phi|_\Sigma\right) F(\phi)\, e^{\im S_M(\phi)}\,\xd\mu_{M}(\phi) .
\label{eq:opi}
\end{equation}
Before we proceed to interpret this formula in terms of text-book quantum field theory language, we emphasize a key property of this quantization prescription. Suppose we have disjoint, but adjacent spacetime regions $M_1$ and $M_2$ supporting classical observables $F_1:K_{M_1}\to\R$ and $F_2:K_{M_2}\to\R$ respectively. Applying first the operation of (O2a) and then that of (O2b), we can compose the corresponding quantum observables $\rho_{M_1}^{F_1}$ and $\rho_{M_2}^{F_2}$ to a new observable, which we shall denote $\rho_{M_1}^{F_1}\aglue \rho_{M_2}^{F_2}$, supported on the spacetime region $M\defeq M_1\cup M_2$. On the other hand, the classical observables $F_1$ and $F_2$ can be extended trivially to the spacetime region $M$ and there be multiplied to a classical observable $F_1\cdot F_2:K_{M}\to\R$. The composition property of the Feynman path integral now implies the identity
\begin{equation}
 \rho_{M}^{F_1\cdot F_2}=\rho_{M_1}^{F_1}\aglue \rho_{M_2}^{F_2} .
\label{eq:corprod}
\end{equation}
That is, there is a direct correspondence between the product of classical observables and the spacetime composition of quantum observables. This \emph{composition correspondence}, as we shall call it, is not to be confused with what is usually meant with the term ``correspondence principle'' such as a relation between the commutator of operators and the Poisson bracket of classical observables that these are representing. Indeed, at a careless glance these concepts might even seem to be in contradiction.

Consider now in Minkowski space a region $M=[t_1,t_2]\times \R^3$, where $t_1<t_2$. Then, $\cH_{\partial M}=\cH_{\Sigma_1}\ctens\cH_{\Sigma_2}^*$ with notation as in Section~\ref{sec:stdampl}. Consider a classical observable $F_{x_1,\dots,x_n}:K_M\to \R$ that encodes an $n$-point function,\footnote{For simplicity we use notation here that suggests a real scalar field.}
\begin{equation}
 F_{x_1,\dots,x_n}:\phi\mapsto \phi(x_1)\cdots\phi(x_n),
\end{equation}
where $x_1,\dots,x_n\in M$. Given an initial state $\psi\in\cH_{\Sigma_1}$ at time $t_1$ and a final state $\eta\in\cH_{\Sigma_2}$ at time $t_2$, the quantization of $F_{x_1,\dots,x_n}$ according to formula (\ref{eq:opi}) can be written in the more familiar form
\begin{multline}
 \rho_M^{F_{x_1,\dots,x_n}}(\psi\tens\iota(\eta))\\
=\int_{K_{M}} \psi(\phi|_{\Sigma_1})\overline{\eta(\phi|_{\Sigma_2})}\, \phi(x_1)\cdots\phi(x_n)\, e^{\im S_M(\phi)}\,\xd\mu_{M}(\phi)\\
 =\langle \eta, \tord \tilde{\phi}(x_1)\cdots\tilde{\phi}(x_n) e^{-\im \int_{t_1}^{t_2} \tilde{H}(t)\,\xd t}\psi\rangle,
\label{eq:qtord}
\end{multline}
where $\tilde{\phi}(x_i)$ are the usual quantizations of the classical observables $\phi\mapsto \phi(x_i)$, $\tilde{H}(t)$ is the usual quantization of the Hamiltonian operator at time $t$ and $\tord$ signifies time-ordering. Thus, in familiar situations the prescription (\ref{eq:opi}) really is the ``usual'' quantization performed in quantum field theory, but with time-ordering of operators. From formula (\ref{eq:qtord}) the correspondence property (\ref{eq:corprod}) is also clear, although in the more limited context of temporal composition. We realize thus the goal, mentioned in the introduction, of implementing the time-ordered product as more fundamental than the non-commutative operator product.

For a linear field theory, it turns out that the quantization prescription encoded in (\ref{eq:opi}) exhibits an interesting factorization property with respect to coherent states. We consider the simple setting of a massive free scalar field theory in Minkowski space with equal-time hypersurfaces. Recall (\cite{CoOe:smatrixgbf} equation (26)) that a coherent state in the Schrödinger representation at time $t$ can be written as
\begin{equation}
\psi_{t,\eta} (\varphi) \defeq C_{t,\eta} \exp \left( \int
\frac{\xd^3 x\,\xd^3 k}{(2 \pi)^3} \, \eta(k) \, e^{-\im(E t-k x)} \,
\varphi(x)\right)\, \psi_0(\varphi) ,
\end{equation}
where $\eta$ is a complex function on momentum space encoding a solution of the Klein-Gordon equation. $\psi_0$ is the vacuum wave function and $C_{t,\eta}$ is a normalization constant. Consider as above an initial time $t_1$, a final time $t_2>t_1$ and the region $M\defeq [t_1,t_2]\times\R^3$ in Minkowski space. Let $F:K_M\to\C$ represent a classical observable. Evaluating the quantized observable map $\rho_M^F$ on an initial coherent state encoded by $\eta_1$ and a final coherent state encoded by $\eta_2$ yields,
\begin{align}
 & \rho_M^F\left(\psi_{t_1,\eta_1}\tens\overline{\psi_{t_2,\eta_2}}\right)\nonumber\\
 & = C_{t_1,\eta_1}\overline{C_{t_2,\eta_2}} \int_{K_M}
 \psi_0(\varphi_1)\,\overline{\psi_0(\varphi_2)}\nonumber\\
 &\quad \exp \left( \int \frac{\xd^3 x\,\xd^3 k}{(2 \pi)^3} \,
\left(\eta_1(k) \,
e^{-\im(E t_1-k x)} \, \varphi_1(x)
+\overline{\eta_2(k)} \,
e^{\im(E t_2-k x)} \, \varphi_2(x)\right)\right)\nonumber\\
&\quad F(\phi)\,e^{\im S_M(\phi)}\,\xd\mu_M(\phi)\nonumber\\
& = \rho_M\left(\psi_{t_1,\eta_1}\tens\overline{\psi_{t_2,\eta_2}}\right)
 \int_{K_M}
 \psi_0(\varphi_1)\,\overline{\psi_0(\varphi_2)}\,
F(\phi+\hat{\eta})\,e^{\im S_M(\phi)}\,\xd\mu_M(\phi) .
\label{eq:sfaci}
\end{align}
Here, $\varphi_i$ denote the restrictions of the configuration $\phi$ to time $t_i$. To obtain the second equality we have shifted the integration variable $\phi$ by
\begin{equation}
\hat{\eta}(t,x)\defeq
\int \frac{\xd^3 k}{(2\pi)^3 2 E}
\left(\eta_1(k) e^{-\im (E t- k x)}
+\overline{\eta_2(k)} e^{\im(E t- k x)}\right)
\label{eq:classcoh}
\end{equation}
and used the conventions of \cite{CoOe:smatrixgbf}. Note that $\hat{\eta}$ is a \emph{complexified} classical solution in $M$ determined by $\eta_1$ and $\eta_2$. We have supposed that $F$ naturally extends to a function on the \emph{complexified} configuration space $K_M^{\C}$. Viewing the function $\phi\to F(\phi+\hat{\eta})$ as a new observable $F^{\hat{\eta}}$, the remaining integral in (\ref{eq:sfaci}) can be interpreted in terms of (\ref{eq:opi}) and we obtain the factorization identity
\begin{equation}
\rho_M^F\left(\psi_{t_1,\eta_1}\tens\overline{\psi_{t_2,\eta_2}}\right)= \rho_M\left(\psi_{t_1,\eta_1}\tens\overline{\psi_{t_2,\eta_2}}\right)
\rho_M^{F^{\hat{\eta}}}(\psi_0\tens\psi_0) .
\label{eq:sfactid}
\end{equation}
That is, the quantum observable map evaluated on a pair of coherent states factorizes into the plain amplitude for the same pair of states and the quantum observable map for a shifted observable evaluated on the vacuum. Note that the second term on the right hand side here is a vacuum expectation value.

It turns out that factorization identities analogous to (\ref{eq:sfactid}) are generic rather than special to the types of hypersurfaces and regions considered here. We will come back to this issue in the next section, where also the role of the complex classical solution $\hat{\eta}$ will become clearer from the point of view of holomorphic quantization. For the moment let us consider the particularly simple case where $F$ is a linear observable. In this case $F^{\hat{\eta}}(\phi)=F(\phi)+F(\hat{\eta})$ and the second term on the right hand side of (\ref{eq:sfactid}) decomposes into a sum of two terms,
\begin{equation}
 \rho_M^{F^{\hat{\eta}}}(\psi_0\tens\psi_0)=\rho_M^F(\psi_0\tens\psi_0) + F(\hat{\eta}) \rho_M(\psi_0\tens\psi_0).
\end{equation}
The first term on the right hand side is a one-point function which vanishes in the present case of a linear field theory. ($F$ is antisymmetric under exchange of $\phi$ and $-\phi$, while the other expressions in (\ref{eq:opi}) are symmetric.) The second factor in the second term is the amplitude of the vacuum and hence equal to unity. Thus, in the case of a linear observable (\ref{eq:sfactid}) simplifies to
\begin{equation}
\rho_M^F\left(\psi_{t_1,\eta_1}\tens\overline{\psi_{t_2,\eta_2}}\right)= F(\hat{\eta}) \rho_M\left(\psi_{t_1,\eta_1}\tens\overline{\psi_{t_2,\eta_2}}\right).
\label{eq:sflin}
\end{equation}

\subsection{Holomorphic quantization}

A more rigorous quantization scheme that produces a GBF from a classical field theory is the holomorphic quantization scheme introduced in \cite{Oe:holomorphic}. It is based on ideas from geometric quantization and its Hilbert spaces are versions of ``Fock representations''. An advantage of this scheme is that taking an axiomatically described classical field theory as input, it produces a GBF as output that can be rigorously proved to satisfy the core axioms of Section~\ref{sec:caxioms}. A shortcoming so far is that only the case of linear field theory has been worked out.

Concretely, the classical field theory is to be provided in the form of a real vector space $L_{\Sigma}$ of (germs of) solutions near each hypersurface $\Sigma$. Moreover, for each region $M$ there is to be given a subspace $L_{\tilde M}$ of the space $L_{\partial M}$ of solutions on the boundary of $M$. This space $L_{\tilde M}$ has the interpretation of being the space of solutions in the interior of $M$ (restricted to the boundary). Also, the spaces $L_{\Sigma}$ carry non-degenerate symplectic structures $\omega_\Sigma$ as well as complex structures $J_\Sigma$. Moreover, for each hypersurface $\Sigma$, the symplectic and complex structures combine to a complete real inner product $g_\Sigma(\cdot,\cdot)=2\omega_\Sigma(\cdot,J_\Sigma\cdot)$ and to a complete complex inner product $\{\cdot,\cdot\}_\Sigma=g_\Sigma(\cdot,\cdot)+2\im\omega_\Sigma(\cdot,\cdot)$. Another important condition is that the subspace $L_{\tilde{M}}\subseteq L_{\partial M}$ is Lagrangian with respect to the symplectic structure $\omega_{\partial M}$.

The Hilbert space $\cH_\Sigma$ associated with a hypersurface $\Sigma$ is the space of \emph{holomorphic} square-integrable functions on $\hat{L}_\Sigma$ with respect to a Gaussian measure $\nu_\Sigma$.\footnote{The space $\hat{L}_\Sigma$ is a certain extension of the space $L_\Sigma$, namely the algebraic dual of its topological dual. Nevertheless, due to Theorem~3.18 of \cite{Oe:holomorphic} it is justified to think of wave functions $\psi$ as functions merely on $L_\Sigma$ rather than on $\hat{L}_\Sigma$, and to essentially ignore the distinction between $L_\Sigma$ and $\hat{L}_\Sigma$.} That is, the inner product in $\cH_\Sigma$ is
\begin{equation}
\langle \psi',\psi\rangle_\Sigma\defeq\int_{\hat{L}_\Sigma} \psi(\phi)\overline{\psi'(\phi)}\,\xd\nu_\Sigma(\phi) .
\label{eq:hsip}
\end{equation} 
Heuristically, the measure $\nu_\Sigma$ can be understood as
\begin{equation}
\xd\nu_\Sigma(\phi)\approx \exp\left(-\frac{1}{2}g_{\partial M}(\phi,\phi)\right)\xd\mu_\Sigma(\phi) ,
\end{equation}
where $\mu_\Sigma$ is a fictitious translation invariant measure on ${\hat{L}_\Sigma}$. The space $\cH_\Sigma$ is essentially the Fock space constructed from $L_\Sigma$ viewed as a 1-particle space with the inner product $\{\cdot,\cdot\}_\Sigma$.

The amplitude map $\rho_M:\cH_{\partial M}\to\C$ associated with a region $M$ is given by the integral formula
\begin{equation}
\rho_M(\psi)\defeq\int_{\hat{L}_{\tilde{M}}} \psi(\phi)\,\xd\nu_{\tilde{M}}(\phi).
\end{equation}
The integration here is over the space $\hat{L}_{\tilde M}\subseteq \hat{L}_{\partial M}$ of solutions in $M$ with the measure $\nu_{\tilde{M}}$, which heuristically can be understood as
\begin{equation}
\xd\nu_{\tilde{M}}(\phi)\approx \exp\left(-\frac{1}{4}g_{\partial M}(\phi,\phi)\right)\xd\mu_{\tilde{M}}(\phi) ,
\end{equation}
where again $\mu_{\tilde{M}}$ is a fictitious translation invariant measure on $\hat{L}_{\tilde{M}}$.

Particularly useful in the holomorphic quantization scheme turn out to be the coherent states that are associated to classical solutions near the corresponding hypersurface. On a hypersurface $\Sigma$ the coherent state $K_\xi\in\cH_\Sigma$ associated to $\xi\in L_\Sigma$ is given by the wave function
\begin{equation}
 K_\xi(\phi)\defeq\exp\left(\frac{1}{2}\{\xi,\phi\}_\Sigma\right) .
\end{equation}
The natural vacuum, which we denote by $\one$, is the constant wave function of unit value. Note that $\one=K_0$.

\subsubsection{Creation and annihilation operators}
\label{sec:crean}

One-particle states on a hypersurface $\Sigma$ are represented by non-zero continuous complex-linear maps $p:L_\Sigma\to\C$, where complex-linearity here implies $p(J\xi)=\im p(\xi)$. By the Riesz Representation Theorem such maps are thus in one-to-one correspondence with non-zero elements of $L_\Sigma$. Concretely, for a non-zero element $\xi\in L_\Sigma$ the corresponding one-particle state is represented by the wave function $p_\xi\in\cH_\Sigma$ given by
\begin{equation}
p_\xi(\phi)=\frac{1}{\sqrt{2}}\{\xi,\phi\}_\Sigma .
\end{equation}
The normalization is chosen here such that $\|p_\xi\|=\|\xi\|$. Physically distinct one-particle states thus correspond to the distinct rays in $L_\Sigma$, viewed as a complex Hilbert space.
An $n$-particle state is represented by a (possibly infinite) linear combination of the product of $n$ wave functions of this type. The creation operator $a^\dagger_\xi$ for a particle state corresponding to $\xi\in L_\Sigma$ is given by multiplication,
\begin{equation}
(a^\dagger_\xi\psi)(\phi) = p_\xi(\phi)\psi(\phi)= \frac{1}{\sqrt{2}}\{\xi,\phi\}_\Sigma\psi(\phi) .
\label{eq:coact}
\end{equation}
The corresponding annihilation operator is the adjoint. Using the reproducing property of the coherent states $K_\phi\in\cH_\Sigma$ we can write it as,
\begin{equation}
(a_\xi\psi)(\phi) = \langle K_\phi, a_\xi \psi \rangle_\Sigma = \langle a^\dagger_\xi K_\phi, \psi \rangle_\Sigma .
\end{equation}
Note in particular, that the action of an annihilation operator on a coherent state is by multiplication,
\begin{equation}
a_\xi K_\phi = \frac{1}{\sqrt{2}}\{\phi,\xi\}_\Sigma K_\phi .
\label{eq:aocact}
\end{equation}
For $\xi,\eta\in L_\Sigma$ the commutation relations are, as usual,
\begin{equation}
[a_\xi,a_\eta^\dagger]=\{\eta,\xi\}_\Sigma,\qquad [a_\xi,a_\eta]=0,\qquad [a_\xi^\dagger,a_\eta^\dagger]=0 .
\end{equation}

\subsubsection{Berezin-Toeplitz quantization}

A natural way to include observables into this quantization scheme seems to be the following. We model a classical observable $F$ on a spacetime region $M$ as a map $L_{\tilde{M}}\to\C$ (or $L_{\tilde{M}}\to\R$) and define the associated quantized observable map via
\begin{equation}
\rho_M^{\ano{F}}(\psi)\defeq \int_{\hat{L}_{\tilde{M}}} \psi(\phi) F(\phi)\,\xd\nu_{\tilde{M}}(\phi) .
\label{eq:bt}
\end{equation}
To bring this into a more familiar form, we consider, as in Section~\ref{sec:recobsexp}, the special case of an empty region $\hat{\Sigma}$, given geometrically by a hypersurface $\Sigma$. Then, for $\psi_1,\psi_1\in\cH_{\Sigma}$ encoding ``initial'' and ``final'' state we have
\begin{equation}
\rho_{\hat{\Sigma}}^{\ano{F}}(\psi_1\tens\iota(\psi_2))=\int_{\hat{L}_{\Sigma}} \psi_1(\phi) \overline{\psi_2(\phi)} F(\phi)\,\xd\nu_{\Sigma}(\phi) .
\label{eq:bths}
\end{equation}
We can interpret this formula as follows: The wave function $\psi_1$ is multiplied by the function $F$. The resulting function is an element of the Hilbert space $\rL^2(\hat{L}_\Sigma,\nu_\Sigma)$ (supposing $F$ to be essentially bounded), but not of the subspace $\cH_\Sigma$ of holomorphic functions. We thus project back onto this subspace and finally take the inner product with the state $\psi_2$. This is precisely accomplished by the integral. We may recognize this as a version of Berezin-Toeplitz quantization, where in the language of Berezin \cite{Ber:covcont} the function $F$ is the contravariant symbol of the operator $\tilde{F}$ that is related to $\rho_{\hat{\Sigma}}^{\ano{F}}$ by formula (\ref{eq:obscor}). That is,
\begin{equation}
\rho_{\hat{\Sigma}}^{\ano{F}}(\psi_1\tens\iota(\psi_2))=\langle\psi_2,\tilde{F}\psi_1\rangle_\Sigma .
\label{eq:opcor}
\end{equation}
In the following we shall refer to the prescription encoded in (\ref{eq:bt}) simply as Berezin-Toeplitz quantization.

Note that any complex valued continuous real-linear observable $F:L_\Sigma\to\C$ can be decomposed into its holomorphic (complex linear) and anti-holomorphic (complex conjugate linear) part
\begin{equation}
 F(\phi)=F^+(\phi)+ F^-(\phi),\quad\text{where}\quad F^{\pm}(\phi)=\frac{1}{2}\left(F(\phi)\mp \im F(J_\Sigma\phi)\right) .
\label{eq:declinobs}
\end{equation}
If we consider real valued observables only, we can parametrize them by elements of $L_\Sigma$ due to the Riesz Representation Theorem. (In the complex valued case the parametrization is by elements of $L_\Sigma^{\C}$, the complexification of $L_\Sigma$, instead.) If we associate to $\xi\in L_\Sigma$ the real linear observable $F_\xi$ given by
\begin{gather}
 F_\xi(\phi)\defeq \sqrt{2}\, g_\Sigma(\xi,\phi),\quad\text{then} \label{eq:paramlinobs}\\
 F^+_\xi(\phi)=\frac{1}{\sqrt{2}}\{\xi,\phi\}_\Sigma,\quad
 F^-_\xi(\phi)=\frac{1}{\sqrt{2}}\{\phi,\xi\}_\Sigma.
\end{gather}
Using the results of Section~\ref{sec:crean}, we see that $F^\pm_\xi$ quantized according to the prescription (\ref{eq:bths}) and expressed in terms of operators $\tilde{F}^\pm_\xi$ as in (\ref{eq:opcor}) yields
\begin{equation}
 \tilde{F}^+_\xi=a^\dagger_\xi,\quad \tilde{F}^-_\xi=a_\xi,\quad\text{and for the sum,}
\quad \tilde{F}_\xi=a^\dagger_\xi+a_\xi .
\end{equation}

Consider now $n$ real-linear observables $F_1,\dots,F_n:L_\Sigma\to\C$. We shall be interested in the \emph{antinormal ordered product} of the corresponding operators $\tilde{F}_1,\dots,\tilde{F}_n$, which we denote by $\ano{\tilde{F}_1\cdots\tilde{F}_n}$. To evaluate matrix elements of this antinormal ordered product we decompose the observables $F_i$ according to (\ref{eq:declinobs}) into holomorphic and anti-holomorphic parts, corresponding to creation operators and annihilation operators respectively. The creation operators $\tilde{F}_i^+$ then act on wave functions by multiplication with $F_i^+$ according to (\ref{eq:coact}). Converting the annihilation operators into creation operators by moving them to the left-hand side of the inner product, we see that these correspondingly contribute factors $F_i^-$ in the inner product (\ref{eq:hsip}). We obtain,
\begin{align*}
\langle \psi_2, \ano{\tilde{F}_1\cdots\tilde{F}_n} \psi_1\rangle_\Sigma &
=\langle \psi_2, \ano{\prod_{i=1}^n(\tilde{F}^+_i+\tilde{F}_i^-)} \psi_1\rangle_\Sigma\\
& =\int_{\hat{L}_\Sigma} \overline{\psi_2(\phi)} \left(\prod_{i=1}^n(F^+_i(\phi)+F_i^-(\phi))\right)
  \psi_1(\phi)\,\xd\nu_\Sigma(\phi) \\
& =\int_{\hat{L}_\Sigma} \overline{\psi_2(\phi)} F_1(\phi)\cdots F_n(\phi)
  \psi_1(\phi)\,\xd\nu_\Sigma(\phi) .
\end{align*}
Setting $F\defeq F_1\cdots F_n$ this coincides precisely with the right-hand side of (\ref{eq:bths}). Thus, in the case of a hypersurface (empty region) the Berezin-Toeplitz quantization precisely implements antinormal ordering.

Remarkably, the Berezin-Toeplitz quantization shares with the Schrö\-din\-ger-Feynman quantization the factorization property exhibited in equation (\ref{eq:sfactid}). In fact, it is in the present context of holomorphic quantization that this property attains a strikingly simple form. In order to state it rigorously, we need a bit of technical language.
For a map $F:L_{\tilde{M}}\to\C$ and an element $\xi\in L_{\tilde{M}}$ we denote by $F^\xi:L_{\tilde{M}}\to\C$ the translated map $\phi\mapsto F(\phi+\xi)$. We say that $F:L_{\tilde{M}}\to\C$ is \emph{analytic} iff for each pair $\phi,\xi\in L_{\tilde{M}}$ the map $z\mapsto F(\phi+z\xi)$ is real analytic. We denote the induced extension $L_{\tilde{M}}^\C\to\C$ also by $F$, where $L_{\tilde{M}}^\C$ is the complexification of $L_{\tilde{M}}$. We say that $F:L_{\tilde{M}}\to\C$ is analytic and \emph{sufficiently integrable} iff for any $\eta\in L_{\tilde{M}}^\C$ the map $F^\eta$ is integrable in $(\hat{L}_{\tilde{M}},\nu_{\tilde{M}})$. We recall (Lemma~4.1 of \cite{Oe:holomorphic}) that elements $\xi$ of $L_{\partial M}$ decompose uniquely as $\xi=\xi^{\textrm{R}}+J_\Sigma \xi^{\textrm{I}}$, where $\xi^{\textrm{R}}, \xi^{\textrm{I}}$ are elements of $L_{\tilde{M}}$.

\begin{prop}[Coherent Factorization Property]
Let $F:L_M\to\C$ be analytic and sufficiently integrable. Then, for any $\xi\in L_{\partial M}$ we have
\begin{equation}
 \rho_M^{\ano{F}}(K_\xi)=\rho_M(K_\xi)\, \rho_M^{\ano{F^{\hat{\xi}}}}(\one),
\label{eq:cohfact}
\end{equation}
where $\hat{\xi}\in L_{\tilde{M}}^{\C}$ is given by $\hat{\xi}=\xi^{\mathrm{R}}-\im\xi^{\mathrm{I}}$.
\end{prop}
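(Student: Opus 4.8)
The plan is to evaluate $\rho_M^{\ano{F}}(K_\xi)$ directly from the integral formula (\ref{eq:bt}) and to isolate the coherent amplitude $\rho_M(K_\xi)$ as a Gaussian prefactor that splits off, leaving the vacuum integral of the shifted observable. By definition $\rho_M^{\ano{F}}(K_\xi)=\int_{\hat{L}_{\tilde{M}}} K_\xi(\phi)\,F(\phi)\,\xd\nu_{\tilde{M}}(\phi)$ with $K_\xi(\phi)=\exp\bigl(\tfrac12\{\xi,\phi\}_{\partial M}\bigr)$. First I would simplify the exponent on the domain of integration $\hat{L}_{\tilde{M}}$. Writing $\xi=\xi^{\mathrm{R}}+J_{\partial M}\xi^{\mathrm{I}}$ with $\xi^{\mathrm{R}},\xi^{\mathrm{I}}\in L_{\tilde{M}}$ and using that $L_{\tilde{M}}$ is Lagrangian (so $\omega_{\partial M}$ vanishes on $L_{\tilde{M}}\times L_{\tilde{M}}$), together with $g_{\partial M}(\cdot,\cdot)=2\omega_{\partial M}(\cdot,J_{\partial M}\cdot)$ and the $J$-invariance of $g_{\partial M}$, a short computation yields, for every $\phi\in L_{\tilde{M}}$,
\begin{equation*}
\{\xi,\phi\}_{\partial M}=g_{\partial M}(\hat{\xi},\phi),
\end{equation*}
where $g_{\partial M}$ is extended complex-bilinearly in its first slot and $\hat{\xi}=\xi^{\mathrm{R}}-\im\xi^{\mathrm{I}}$. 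Thus on $\hat{L}_{\tilde{M}}$ the coherent wave function is the Gaussian exponential $K_\xi(\phi)=\exp\bigl(\tfrac12 g_{\partial M}(\hat{\xi},\phi)\bigr)$.

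The second ingredient is a translation (Cameron--Martin) formula for $\nu_{\tilde{M}}$. For a real shift $\zeta\in L_{\tilde{M}}$ and integrable $G$ I would establish
\begin{equation*}
\int_{\hat{L}_{\tilde{M}}} G(\phi+\zeta)\,\xd\nu_{\tilde{M}}(\phi)
 =e^{-\frac14 g_{\partial M}(\zeta,\zeta)}\int_{\hat{L}_{\tilde{M}}} G(\phi)\,e^{\frac12 g_{\partial M}(\zeta,\phi)}\,\xd\nu_{\tilde{M}}(\phi).
\end{equation*}
At the heuristic level this is just completing the square in $\exp\bigl(-\tfrac14 g_{\partial M}(\phi,\phi)\bigr)$ and invoking translation invariance of the fictitious $\mu_{\tilde{M}}$; rigorously it is the quasi-invariance of the Gaussian measure $\nu_{\tilde{M}}$ under shifts by Cameron--Martin vectors, with $L_{\tilde{M}}$ playing the role of the Cameron--Martin space.

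The crux is to run this identity with the \emph{complex} shift $\zeta=\hat{\xi}\in L_{\tilde{M}}^{\C}$. Here the hypotheses enter decisively: parametrizing the shift as $\zeta_s=\xi^{\mathrm{R}}+s\xi^{\mathrm{I}}$ with $s\in\R$, both sides become restrictions to the real axis of functions of $s$ that, because $F$ is analytic and sufficiently integrable, extend holomorphically in $s$; agreeing for all real $s$, they agree at $s=-\im$, where $\zeta_{-\im}=\hat{\xi}$. Taking $G=F$ and recognizing $e^{\frac12 g_{\partial M}(\hat{\xi},\phi)}=K_\xi(\phi)$ gives
\begin{equation*}
\rho_M^{\ano{F^{\hat{\xi}}}}(\one)=\int_{\hat{L}_{\tilde{M}}} F(\phi+\hat{\xi})\,\xd\nu_{\tilde{M}}(\phi)
 =e^{-\frac14 g_{\partial M}(\hat{\xi},\hat{\xi})}\,\rho_M^{\ano{F}}(K_\xi).
\end{equation*}

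Finally, specializing to $F=1$ and using that $\nu_{\tilde{M}}$ is a probability measure, so $\rho_M(\one)=1$, identifies the prefactor as the coherent amplitude, $\rho_M(K_\xi)=e^{\frac14 g_{\partial M}(\hat{\xi},\hat{\xi})}$. Substituting this back into the displayed relation yields (\ref{eq:cohfact}). I expect the main obstacle to be precisely the justification of the complex translation of the integration variable: one must verify that analyticity together with sufficient integrability genuinely force holomorphic dependence on the shift parameter and license the analytic continuation in the infinite-dimensional Gaussian setting, so that the argument is more than a formal manipulation of the non-existent measure $\mu_{\tilde{M}}$. The algebraic reductions in the first two paragraphs are, by contrast, routine once the Lagrangian and compatibility relations are in hand.
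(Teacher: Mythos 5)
Your proposal is correct and follows essentially the same route as the paper's proof: rewrite the coherent state exponent on the Lagrangian subspace $L_{\tilde{M}}$, apply the real translation (Cameron--Martin) identity for the Gaussian measure $\nu_{\tilde{M}}$, and reach the complex shift $\hat{\xi}=\xi^{\mathrm{R}}-\im\xi^{\mathrm{I}}$ by holomorphic continuation in the shift parameter, exactly as the paper does by replacing the $\im$ in $K_\xi(\phi)=\exp\bigl(\tfrac12 g_{\partial M}(\xi^{\mathrm{R}},\phi)-\tfrac{\im}{2}g_{\partial M}(\xi^{\mathrm{I}},\phi)\bigr)$ with a real parameter and continuing. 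The only cosmetic differences are that you package the exponent as $\tfrac12 g_{\partial M}(\hat{\xi},\phi)$ from the outset and obtain $\rho_M(K_\xi)=e^{\frac14 g_{\partial M}(\hat{\xi},\hat{\xi})}$ from the $F=1$ case rather than citing it, and you correctly identify the analytic continuation of the complex shift as the one step requiring genuine justification.
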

\begin{proof}
Recall that for $\phi\in\hat{L}_{\tilde{M}}$ we can rewrite the wave function of the coherent state $K_\xi$ as follows,
\begin{equation}
 K_\xi(\phi)=\exp\left(\frac{1}{2}g_{\partial M}(\xi^{\textrm{R}}, \phi)-\frac{\im}{2}g_{\partial M}(\xi^{\textrm{I}}, \phi)\right) .
\label{eq:cohdec}
\end{equation}
We restrict first to the special case $\xi\in L_{\tilde{M}}$, i.e., $\xi^{\textrm{I}}=0$. Translating the integrand by $\xi$ (using Proposition~3.11 of \cite{Oe:holomorphic}) we find
\begin{align*}
& \int_{\hat{L}_{\tilde{M}}} F(\phi) \exp\left(\frac{1}{2} g_{\partial M}(\xi,\phi)\right)\xd\nu(\phi)  \\
& = \int_{\hat{L}_{\tilde{M}}} F(\phi+\xi) \exp\left(\frac{1}{2} g_{\partial M}(\xi,\phi+\xi)-\frac{1}{4}g_{\partial M}(2\phi+\xi,\xi)\right)\xd\nu(\phi) \\
& =  \exp\left(\frac{1}{4}g_{\partial M}(\xi,\xi)\right) \int_{\hat{L}_{\tilde{M}}} F(\phi+\xi)\,\xd\nu(\phi)\\
& = \rho_M(K_\xi)\,\rho_M^{\ano{F^\xi}}(\one)
\end{align*}
In order to work out the general case, we follow the strategy outlined in the proof of Proposition~4.2 of \cite{Oe:holomorphic}: We replace the $\im$ in (\ref{eq:cohdec}) by a complex parameter and note that all relevant expressions are holomorphic in this parameter. This must also hold for the result of the integration performed above. But performing the integration is straightforward when this parameter is real, since we can then combine both terms in the exponential in (\ref{eq:cohdec}). On the other hand, a holomorphic function is completely determined by its values on the real line. This leads to the stated result.
\end{proof}

It is clear at this point that equation (\ref{eq:sfactid}) is just a special case of (the analogue for $\rho_M^F$ of) equation (\ref{eq:cohfact}). Indeed, it turns out that with a suitable choice of complex structure (see \cite{Oe:holomorphic}) the complexified classical solution $\hat{\eta}$ given by (\ref{eq:classcoh}) decomposes precisely as $\hat{\eta}=\eta^{\mathrm{R}}-\im\eta^{\mathrm{I}}$.\footnote{We differ here slightly from the conventions in \cite{Oe:holomorphic} to obtain exact agreement.} From here onwards we shall say that a quantization scheme satisfying equation (\ref{eq:cohfact}) has the \emph{coherent factorization property}.

The coherent factorization property may also be interpreted as suggesting an intrinsic definition of a \emph{real} observable in the quantum setting. It is clear that quantum observable maps must take values in the complex numbers and not merely in the real numbers since for example the amplitude map is a special kind of quantum observable map.\footnote{Proposition~4.2 of \cite{Oe:holomorphic} implies that amplitude maps generically take complex and not merely real values.} On the other hand, we have in axiom (O1) deliberately only required that the observable maps in a region $M$ form a real vector space $\obs_M$, to allow for a restriction to ``real'' observables, analogous to hermitian operators in the standard formulation and to real valued maps in the classical theory. Of course, given a quantization prescription such as (\ref{eq:opi}) or (\ref{eq:bt}), we can simply restrict the quantization to real classical observables. However, equation (\ref{eq:cohfact}) suggests a more intrinsic definition in case of availability of coherent states. Namely, we could say that a quantum observable map is \emph{real} iff its evaluation on a coherent state $K_\xi$ associated to any element $\xi$ in the subspace $L_{\tilde M}\subseteq L_{\partial M}$ yields a real multiple of the amplitude map evaluated on the same coherent state. Note that this characterization is closed under real linear combinations. Also, if a quantization scheme satisfies the coherent factorization property, this characterization coincides with the condition for the classical observable to be real valued, as is easily deduced using the completeness of the coherent states.

Let us briefly return to the special case of linear observables. Suppose that $F:L_{\tilde M}\to\R$ is linear (implying analytic) and sufficiently integrable. We evaluate the Berezin-Toeplitz quantum observable map $\rho_M^{\ano{F}}$ on the coherent state $K_\xi$ associated to $\xi\in L_{\partial M}$. As above we define $\hat{\xi}\in L_{\tilde{M}}^{\C}$ as $\hat{\xi}=\xi^{\mathrm{R}}-\im\xi^{\mathrm{I}}$. Using the coherent factorization property (\ref{eq:cohfact}) as well as linearity of $F$ we obtain
\begin{equation}
\rho_M^{\ano{F}}(K_\xi)=\rho_M(K_\xi) \rho_M^{\ano{F^{\hat{\xi}}}}(\one)=\rho_M(K_\xi)
 \left(\rho_M^{\ano{F}}(\one)+F(\hat{\xi})\rho_M(\one) \right) .
\end{equation}
The first term in brackets vanishes by inspection of (\ref{eq:bt}) due to anti-symmetry of $F$ under exchange of $\phi$ and $-\phi$, while $\rho_M(\one)=1$. Thus, analogous to (\ref{eq:sflin}) we obtain
\begin{equation}
\rho_M^{\ano{F}}(K_\xi)=F(\hat{\xi})\rho_M(K_\xi) .
\label{eq:btlin}
\end{equation}

Supposing that the amplitudes of coherent states coincide between the Schrödinger-Feynman scheme and the holomorphic scheme (that is, if the complex structure of the holomorphic scheme and the vacuum of the Schrö\-din\-ger-Feynman scheme are mutually adapted), also the quantization of linear observables according to (\ref{eq:opi}) coincides with that of (\ref{eq:bt}). Nevertheless, the quantization of non-linear observables is necessarily different. For one, classical observables in the Schrödinger-Feynman scheme are defined on configuration spaces rather than on spaces of solutions. Indeed, the quantization of observables that coincide when viewed merely as functions on solutions differs in general. However, it is also clear that the Berezin-Toeplitz quantization cannot satisfy the composition correspondence property (\ref{eq:corprod}) that is satisfied by the Schrödinger-Feynman scheme. Indeed, consider adjacent regions $M_1$ and $M_2$ that can be glued to a joint region $M$. Then, the classical observables in the disjoint region induce classical observables in the glued region, but not the other way round. While the former are functions on $L_{\tilde{M}_1}\times L_{\tilde{M}_2}$ the latter are functions on the subspace $L_{\tilde{M}}\subseteq L_{\tilde{M}_1}\times L_{\tilde{M}_2}$. In spite of the summation involved in axiom (O2b), one can use this to cook up a contradiction to the composition correspondence property (\ref{eq:corprod}).
It is easy to see how this problem is avoided in the Schrödinger-Feynman scheme: There, classical observables in a region $M$ are functions on the space of field configurations $K_M$, which is much larger than the space of classical solutions $L_M$ and permits the ``decoupling'' of observables in adjacent regions. Indeed, the present considerations indicate that in order for a quantization scheme to satisfy the composition correspondence property, this kind of modification of the definition of what constitutes a classical observable is a necessity.

The composition correspondence property suggests also a different route to quantization of observables: We may consider a quantization scheme only for linear observables and impose the composition correspondence property to \emph{define} a quantization scheme for more general observables. In the Berezin-Toeplitz case this would lead to a scheme equivalent to the path integral (\ref{eq:opi}). However, recall that the composition of quantum observable maps is only possible between disjoint regions. On the other hand, well known difficulties (related to renormalization) arise also for the path integral (\ref{eq:opi}) when considering field observables at coincident points.

\subsubsection{Normal ordered quantization}

Consider a hypersurface $\Sigma$ and linear observables $F_1,\dots,F_n:L_\Sigma\to\C$ in the associated empty region $\hat{\Sigma}$. Consider now the \emph{normal ordered product} $\no{\tilde{F}_1\cdots \tilde{F}_n}$ and its matrix elements. These matrix elements turn out to be particularly simple for coherent states. To evaluate these we decompose the maps $F_1,\dots,F_n$ into holomorphic (creation) and anti-holomorphic (annihilation) parts according to (\ref{eq:declinobs}). The annihilation operators act on coherent states simply by multiplication according to (\ref{eq:aocact}). The creation operators on the other hand can be converted to annihilation operators by moving them to the left-hand side of the inner product. We find,
\begin{equation}
 \langle K_\eta, \no{F_1\cdots F_n} K_\xi\rangle_\Sigma
 =\prod_{i=1}^n \left(F_i^+(\eta)+F_i^-(\xi)\right) \langle K_\eta, K_\xi\rangle_\Sigma .
\label{eq:nprodcohid}
\end{equation}
While this expression looks quite simple, it can be further simplified by taking serious the point of view that $\hat{\Sigma}$ is an (empty) region. Hence, $K_\xi\tens\iota(K_\eta)$ is really the coherent state $K_{(\xi,\eta)}\in\cH_{\partial \hat{\Sigma}}$ associated to the solution $(\xi,\eta)\in L_{\partial \hat{\Sigma}}$. As above we may decompose $(\xi,\eta)=(\xi,\eta)^{\textrm{R}}+J_{\partial\hat{\Sigma}}(\xi,\eta)^{\textrm{I}}$, where $(\xi,\eta)^{\textrm{R}},(\xi,\eta)^{\textrm{I}}\in L_{\tilde{\hat{\Sigma}}}$. Identifying $L_{\tilde{\hat{\Sigma}}}$ with $L_\Sigma$ (and taking into account $J_{\partial\hat{\Sigma}}=(J_\Sigma,-J_{\Sigma})$) we have
\begin{equation}
 (\xi,\eta)^{\textrm{R}}=\frac{1}{2}(\xi+\eta),\quad (\xi,\eta)^{\textrm{I}}=-\frac{1}{2}(J_\Sigma\xi-J_\Sigma\eta) .
\end{equation}
But observe,
\begin{multline}
 F_i^+(\eta)+F_i^-(\xi)=\frac{1}{2}\left(F_i(\eta+\xi)-\im F_i(J_\Sigma(\eta-\xi))\right)\\
 = F_i\left((\xi,\eta)^{\textrm{R}}\right)-\im F_i\left((\xi,\eta)^{\textrm{I}}\right)
 = F_i\left((\xi,\eta)^{\textrm{R}}-\im (\xi,\eta)^{\textrm{I}}\right),
\end{multline}
where in the last step we have extended the domain of $F_i$ from $L_{\tilde{\hat{\Sigma}}}$ to its complexification $L_{\tilde{\hat{\Sigma}}}^{\C}$.

Defining a quantum observable map encoding the normal ordered product
\begin{equation}
 \rho_{\hat{\Sigma}}^{\no{F_1\cdots F_n}}(\psi_1\tens\iota(\psi_2))\defeq \langle \psi_2, \no{\tilde{F}_1\cdots \tilde{F}_n} \psi_1\rangle_\Sigma ,
\label{eq:nquantops}
\end{equation}
the identity (\ref{eq:nprodcohid}) becomes thus
\begin{equation}
 \rho_{\hat{\Sigma}}^{\no{F_1\cdots F_n}}(K_{(\xi,\eta)})
 =\prod_{i=1}^n F_i\left((\xi,\eta)^{\textrm{R}}-\im (\xi,\eta)^{\textrm{I}}\right) \rho_{\hat{\Sigma}}(K_{(\xi,\eta)}) .
\label{eq:nprodcoh}
\end{equation}
Note that in the above expression the fact that we consider an empty region rather than a generic region is no longer essential. Rather, we may consider a region $M$ and replace $(\xi,\eta)$ by some solution $\phi\in L_{\tilde{M}}$. Also there is no longer a necessity to write the observable explicitly as a product of linear observables. A generic observable $F:L_{\tilde{M}}\to\C$ that has the analyticity property will do. We obtain,
\begin{equation}
 \rho_{M}^{\no{F}}(K_{\phi})
 \defeq F(\hat{\phi}) \rho_{M}(K_{\phi}),
\label{eq:nquant}
\end{equation}
where $\hat{\phi}\defeq\phi^{\textrm{R}}-\im\phi^{\textrm{I}}$. We may take this now as the \emph{definition} of a quantization prescription that we shall call \emph{normal ordered quantization}. It coincides with, and provides an extremely concise expression for the usual concept of normal ordering in the case when $M$ is the empty region associated to a hypersurface.

Interestingly, expression (\ref{eq:nquant}) also coincides with expression (\ref{eq:btlin}) and with expression (\ref{eq:sflin}). However, the latter two expressions were only valid in the case where $F$ is linear. So, unsurprisingly, we obtain agreement of normal ordered quantization with Berezin-Toeplitz quantization and with Schrödinger-Feynman quantization in the case of linear observables, while they differ in general. Remarkably, however, normal ordered quantization shares with these other quantization prescriptions the coherent factorization property (\ref{eq:cohfact}). To see this, note using (\ref{eq:nquant}),
\begin{equation}
 \rho_M^{\no{F^{\hat{\phi}}}}(\one)=F^{\hat{\phi}}(0)\rho_M(\one)=F(\hat{\phi}) .
\end{equation}

\subsubsection{Geometric quantization}

Since the holomorphic quantization scheme draws on certain ingredients of geometric quantization, it is natural to also consider what geometric quantization has to say about the quantization of observables \cite{Woo:geomquant}. For hypersurfaces $\Sigma$ (empty regions $\hat{\Sigma}$) the geometric quantization of a classical observable $F:L_\Sigma\to\C$ is given by an operator $\check{F}:\cH_\Sigma\to\cH_\Sigma$. If the observable $F$ preserves the polarization (which is the case for example for linear observables), then $\check{F}$ is given by the formula
\begin{equation}
 \check{F} \psi=-\im\, \xd\psi(\mathbf{F})-\theta(\mathbf{F}) \psi+ F \psi .
\label{eq:geomquantobs}
\end{equation}
Here $\mathbf{F}$ denotes the Hamiltonian vector field generated by the function $F$, $\theta$ is the symplectic potential given here by $\theta_\eta(\Phi)=-\frac{\im}{2}\{\eta,\Phi_\eta\}_\Sigma$, and $\xd\psi$ is the exterior derivative of $\psi$.

Consider a real-linear observable $F:L_\Sigma\to\C$. Without explaining the details we remark that for the holomorphic part $F^+$ (recall (\ref{eq:declinobs})) we obtain $\xd\psi(\mathbf{F^+})=0$ as well as $\theta(\mathbf{F^+})=0$. On the other hand, for the anti-holomorphic part $F^-$ we have $\theta(\mathbf{F^-})=F^-$. This simplifies (\ref{eq:geomquantobs}) to
\begin{equation}
 \check{F} \psi=-\im\, \xd\psi(\mathbf{F^-}) + F^+ \psi .
\end{equation}
Setting $F$ equal to $F_\xi$ given by (\ref{eq:paramlinobs}) for $\xi\in L_\Sigma$ results in $F^+ \psi = a^\dagger_\xi \psi$ and $-\im\, \xd\psi(\mathbf{F^-})=a_\xi \psi$. That, is the operator $\check{F}$ coincides with the operator $\tilde{F}$ obtained by quantizing $F$ with any of the other prescriptions discussed. On the other hand, the quantization of non-linear observables will in general differ from any of the other prescriptions. Indeed, it is at present not even clear whether or how the prescription (\ref{eq:geomquantobs}) can be generalized to non-empty regions.

\subsection*{Acknowledgments}
I would like to thank the organizers of the \emph{Regensburg Conference 2010: Quantum Field Theory and Gravity} for inviting me to present aspects of this work as well as for providing a stimulating and well-organized conference. This work was supported in part by CONACyT grant 49093.

\bibliographystyle{amsordx} 
\bibliography{stdrefs}

\begin{thebibliography}{10}

\bibitem{Oe:GBQFT}
R.~Oeckl, {\em General boundary quantum field theory: Foundations and
  probability interpretation}, Adv. Theor. Math. Phys. {\bf 12} (2008),
  319--352, hep-th/0509122.

\bibitem{Ati:tqft}
M.~Atiyah, {\em Topological quantum field theories}, Inst. Hautes \'Etudes Sci.
  Publ. Math. (1989), no.~68, 175--186.

\bibitem{Oe:boundary}
R.~Oeckl, {\em A ``general boundary'' formulation for quantum mechanics and
  quantum gravity}, Phys. Lett. {\bf B 575} (2003), 318--324, hep-th/0306025.

\bibitem{Oe:2dqym}
R.~Oeckl, {\em Two-dimensional quantum Yang-Mills theory with corners}, J.
  Phys. A {\bf 41} (2008), 135401, hep-th/0608218.

\bibitem{Oe:holomorphic}
R.~Oeckl, {\em Holomorphic Quantization of Linear Field Theory in the General
  Boundary Formulation}, SIGMA {\bf 8} (2012), 050, 31 pages,
  arXiv:1009.5615.

\bibitem{Oe:probgbf}
R.~Oeckl, {\em Probabilities in the general boundary formulation}, J. Phys.:
  Conf. Ser. {\bf 67} (2007), 012049, hep-th/0612076.

\bibitem{Oe:timelike}
R.~Oeckl, {\em States on timelike hypersurfaces in quantum field theory}, Phys.
  Lett. {\bf B 622} (2005), 172--177, hep-th/0505267.

\bibitem{Oe:KGtl}
R.~Oeckl, {\em General boundary quantum field theory: Timelike hypersurfaces in
  Klein-Gordon theory}, Phys. Rev. {\bf D 73} (2006), 065017, hep-th/0509123.

\bibitem{CoOe:spsmatrix}
D.~Colosi and R.~Oeckl, {\em S-matrix at spatial infinity}, Phys. Lett. {\bf B
  665} (2008), 310--313, arXiv:0710.5203.

\bibitem{CoOe:smatrixgbf}
D.~Colosi and R.~Oeckl, {\em Spatially asymptotic S-matrix from general
  boundary formulation}, Phys. Rev. {\bf D 78} (2008), 025020, arXiv:0802.2274.

\bibitem{Ber:covcont}
F.~A. Berezin, {\em Covariant and Contravariant Symbols of Operators}, Math.
  USSR Izvestija {\bf 6} (1972), 1117--1151.

\bibitem{Woo:geomquant}
N.~M.~J. Woodhouse, {\em Geometric Quantization}, 2nd ed., Oxford University
  Press, Oxford, 1991.

\end{thebibliography}
\end{document}